\newcommand{\defeq}{\vcentcolon=}
\DeclareMathOperator{\tr}{Tr}
\newcommand{\id}{\mathds{1}}
\newcommand{\ket}[1]{\left| #1 \right\rangle}
\newcommand{\ketbra}[2]{\left|#1\middle\rangle\middle\langle#2\right|}
\newcommand{\Proj}[1]{[[#1]]}
\newcommand{\overbar}[1]{\mkern 1.5mu\overline{\mkern-2.5mu#1\mkern-1.5mu}\mkern 1.5mu}
\newtheorem{theorem}{Theorem}
\newtheorem*{theorem*}{Theorem}
\newtheorem{lemma}[theorem]{Lemma}
\newtheorem{definition}{Definition}
\newcommand{\lin}{{\cal L}}
\begin{document}

\title{Updating the Born rule}
\author{Sally Shrapnel}
\email{s.shrapnel@uq.edu.au}
\affiliation{School of Historical and Philosophical Inquiry, The University of Queensland, St Lucia, QLD 4072, Australia}
\affiliation{Centre for Engineered Quantum Systems, School of Mathematics and Physics, The University of Queensland, St Lucia, QLD 4072, Australia}
\author{Fabio Costa}
\email{f.costa@uq.edu.au}
\affiliation{Centre for Engineered Quantum Systems, School of Mathematics and Physics, The University of Queensland, St Lucia, QLD 4072, Australia}
\author{Gerard Milburn}
\email{g.milburn@uq.du.au}
\affiliation{Centre for Engineered Quantum Systems, School of Mathematics and Physics, The University of Queensland, St Lucia, QLD 4072, Australia}

\date{\today}
\begin{abstract}
Despite the tremendous empirical success of quantum theory there is still widespread disagreement about what it can tell us about the nature of the world. A central question is whether the theory is about our \emph{knowledge} of reality, or a direct statement about reality itself. Regardless of their stance on this question, current interpretations of quantum theory regard the Born rule as fundamental and add an independent state-update (or "collapse") rule to describe how quantum states change upon measurement. In this paper we present an alternative perspective and derive a probability rule that subsumes \emph{both} the Born rule \emph{and} the collapse rule. We show that this more fundamental probability rule can provide a rigorous foundation for informational, or "knowledge-based", interpretations of quantum theory. \end{abstract}
\maketitle

\section*{}

Knowledge-based, or informational, views of quantum theory are popular for a variety of reasons. Perhaps one of the strongest motivations for this perspective comes from the conceptual difficulties that surround quantum state collapse upon measurement. If quantum states are a direct description of reality then this seems to demand that collapse is a non-linear, stochastic and temporally ill-defined physical process~\cite{ghirardi1986, Tumulka1987, Bassi2013, Gisin2017}. From a "knowledge" perspective however, collapse is seen as merely a form of information update, no more problematic than classical probabilistic conditioning \cite{Ozawa1997, Fuchs2002, Caves2002, wiseman2009quantum, timpson2013quantum, Mermin2017, Brukner2017}.

Whilst compelling, there is an obvious problem with this kind of approach: classical probabilistic conditioning treats two consecutive events on a \emph{single} system on exactly the same footing as two events on \emph{distinct} systems: joint probabilities are defined in exactly the same way in each case. In quantum mechanics however, the Born rule does not assign joint probabilities to consecutive events~\cite{Leifer2006}, Fig.~\ref{born}. This means that knowledge-based interpretations, where one argues that the Born rule is fundamental and the state-update rule "merely a case of probabilistic conditioning", are deeply unsatisfactory. Both rules have to be introduced and justified separately.

In this paper we aim to provide a solution to this problem and breathe new life into the knowledge-based view of quantum theory. We present a new, Gleason-type proof of a quantum probability rule that subsumes both the Born rule and the state-update rule. This rule is useful in a variety of contexts, from quantum information \cite{gutoski06, chiribella08, Chiribella2008, chiribella09b, Bisio2011, Bisio2014} to quantum causal modelling \cite{oreshkov12, Leifer2013, costa2016, Allen2016}, and non-markovian dynamics~\cite{modioperational2012, Ringbauer2015, pollockcomplete2015, Milz2016}. Dubbed the "Quantum Process Rule", we prove that one can derive this higher-order, generalised form of the standard quantum probability rule from the structure of quantum operations and a reasonable non-contextuality assumption. We also show that using this more fundamental approach, where one assigns joint probabilities to arbitrary quantum events, it is possible to derive both the Born rule and the state-update rule. A key advantage is that state-update, or "collapse" need no longer be viewed as an ad hoc ingredient, independent and estranged from the core of the theory.

In order to introduce the least possible assumptions, we take an explicitly operational perspective.  Operational theories can be phrased in terms of  \emph{events}, which define the results of \emph{measurements}. Each time a measurement is performed on a system, a number of possible events can be observed. The ensemble of all events that can result from a specific measurement is called a \emph{context}. 

It is natural, when constructing such a theory, to assume \emph{measurement non-contextuality} \cite{Fuchs2002, Caves2004}. This means that operationally indistinguishable events should have the same mathematical representation in the theory. Clearly, any probabilistic theory can be formulated in a non-contextual way by appropriate relabelling of the mathematical objects describing events.

In this setting, the minimal task of a physical theory is to non-contextually assign probabilities to such measurement events. In essence this is the "probability rule" of the theory and also defines the relevant state-space. One can represent any such non-contextual probability rule (the Born rule being a prime example) by means of a \emph{frame function}. This is a function that associates a probability to every event, independently of the context to which it belongs, such that probabilities for all events in a given context sum up to one. Crucially, the frame function is \emph{not} a probability distribution over the space of all events, as that would require a normalised measure over the entire space. The word "frame" here is thus synonymous with "context". 

\begin{widetext}
\onecolumngrid
\begin{figure}[ht]%
\includegraphics[width=0.499\columnwidth]{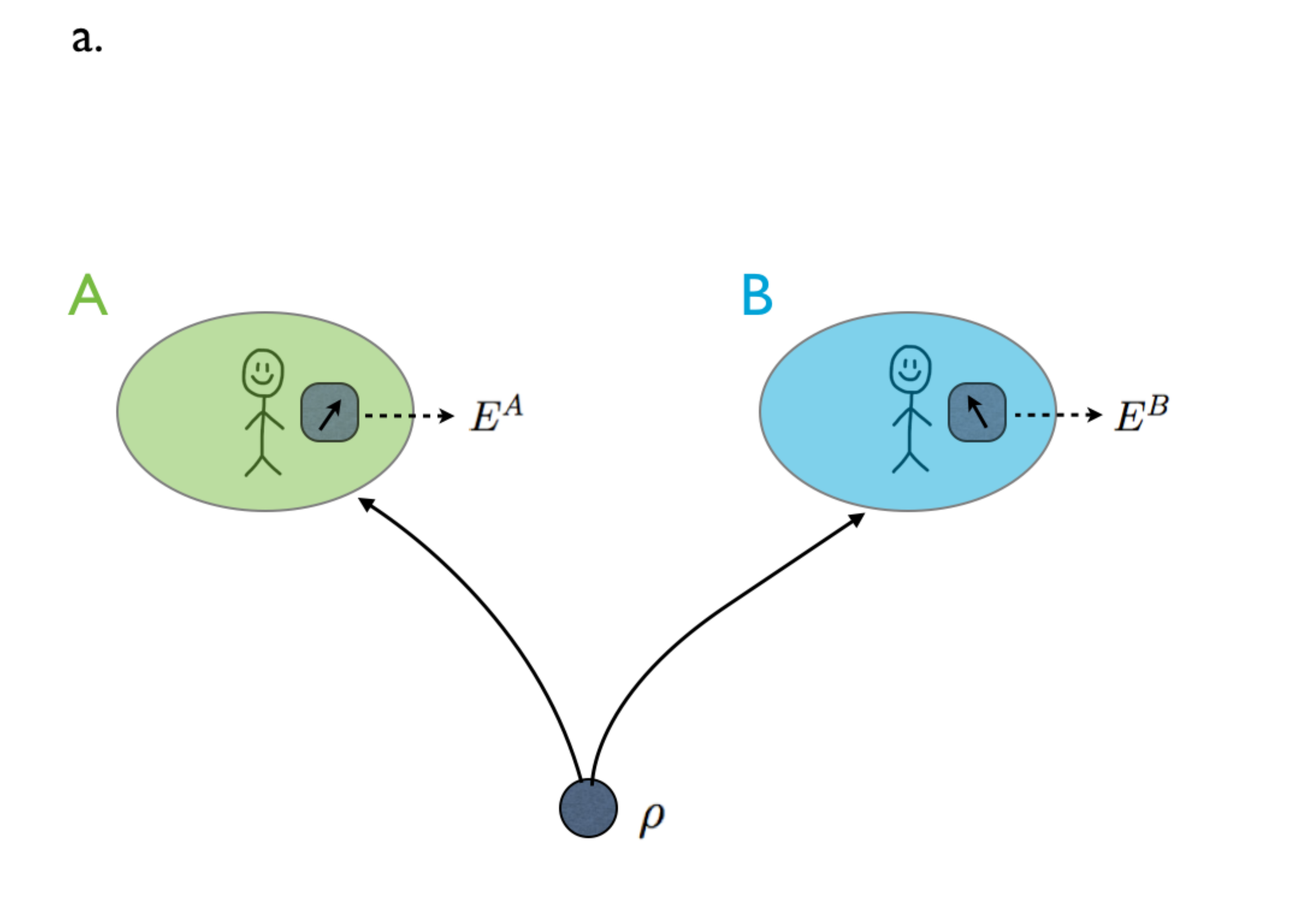} \!\!\!
\includegraphics[width=0.499\columnwidth]{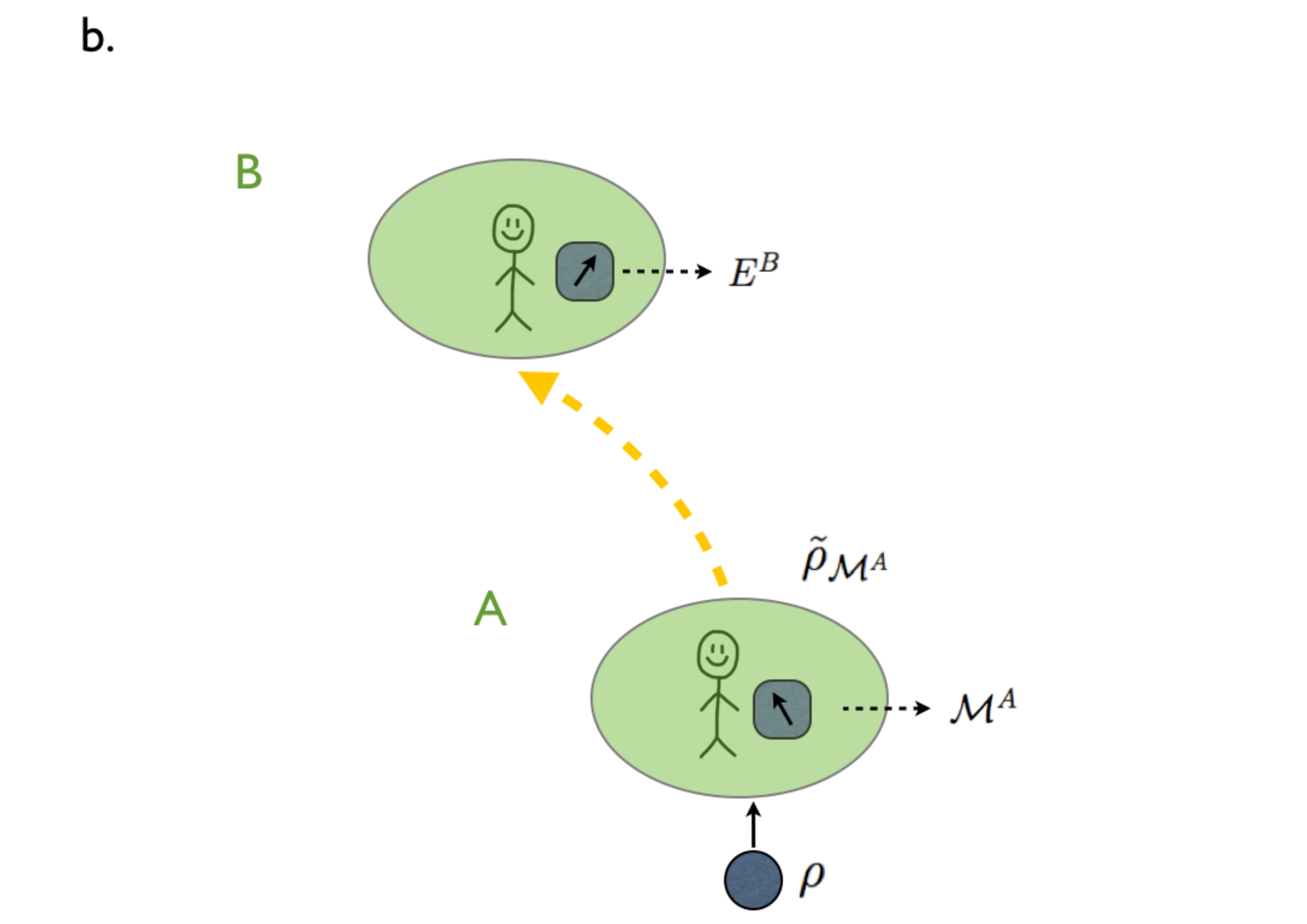} 
\caption{\textbf{Quantum probability rules}. a) The Born rule assigns probabilities to measurements on distinct systems: for a state $\rho$, and measurement operators $E^{A,B}$, the probability is $P(E^A,E^B)=\tr\left[\left(E^A\otimes E^B\right)\cdot \rho\right]$. b) For two consecutive measurements on the same system, one cannot apply the Born rule without first updating the state. The state update-rule, defined as $\rho\rightsquigarrow \widetilde{\rho}_{\mathcal{M}^A}  = \mathcal{M}^A\left(\rho\right)/\tr \mathcal{M}^A\left(\rho\right)$ for a completely positive map $\mathcal{M}^A$ describing the first measurement, is typically introduced as an independent axiom in the theory.
}%
\label{born}%
\end{figure}
\twocolumngrid
\end{widetext}

Operational approaches to quantum theory typically rely on Gleason's theorem, and generalisations thereof~\cite{gleason57, Busch2003, Caves2004, Barnett2014, hall2016comment},  to derive the Born rule. Let us briefly consider how this approach works.  Following Ref.~\cite{Caves2004}, events are identified with \emph{quantum} effects, that represent the result of a measurement on a quantum system. Formally, for a $d$-level quantum system, the full set of quantum effects is defined as $\mathcal{E}_d:=\left\{E\in \mathcal{L}\left(\mathcal{H}_d\right),\,0\leq E\leq\id\right\}$, where $\mathcal{L}\left(\mathcal{H}_d\right)$ is the space of linear operators on a $d$-dimensional Hilbert space $\mathcal{H}_d$. Contexts are described by positive-operator-valued measures (POVMs). A POVM is a complete set $X$ of effect operators that sum up to the identity, $\sum_{E\in X}{E}=\id$.

Assuming measurement non-contextuality here means that the probability of a particular quantum effect is assumed to be independent of the context (POVM) to which it belongs. Operationally, this means that the probability assigned to a given event doesn't depend on any extra information regarding how it was achieved.

A frame function for quantum effects is defined as a mapping from the set of all effects to the unit interval:
\begin{equation}
f \colon \mathcal{E}_d \rightarrow [0,1],
\end{equation}
satisfying
\begin{equation}
\sum_{E\in X} f(E) = 1 
\end{equation}
\begin{equation}
\forall X=\{E \in \mathcal{E}_d| \sum_{E\in X} E = \id \}.
\end{equation}

Using this definition, the task then is to prove that for each frame function, $f$, there is a unit-trace positive operator $\rho$ such that $f(E) = \tr (\rho E)$. 

The proof in Ref.~\cite{Caves2004} follows three simple steps. First, one proves linearity of the frame-function over the field of nonnegative rational numbers, then extension to full linearity is obtained by proving continuity of the frame-function. Then, as the frame-function has been proved to be linear, it can be recast as arising from an inner product. In particular, using the Hilbert-Schmidt inner product on the operator space $\mathcal{L}\left(\mathcal{H}\right)$, the frame-function can be written as $f(E) = \tr (\rho E)$ for some positive semidefinite, unit-trace operator $\rho$. This both characterises the Born rule and also defines the density operator as the appropriate object to represent the quantum state.

As we have noted, the above proof does not tell us how to assign probabilities to consecutive events. That is, assuming we know the state of a quantum system prior to measurement, the Born rule alone does not tell us how to update this state following measurement. To remedy this situation, we now wish to provide a similar proof for a probability rule that can subsume \emph{both} the Born rule and the state-update rule. 

We consider more general operational primitives than those of Ref.~\cite{Caves2004} and instead consider local regions where one can perform actions that are associated with outcomes. The class of allowed local actions is broad: one can perform measurements, realise transformations, or even add and discard ancillary systems. Such actions can also be associated with local outcomes and we define a particular single case outcome, associated to a given action, as the relevant \emph{event}. The event thus now labels not only the outcome but also any concurrent transformation to the local system. 

Just as with effects in the traditional approaches, we assume a minimal operational labelling for transformations: different interactions of the system with an environment, that cannot be distinguished by looking at the system alone, will be assigned the same label.

If we consider a particular run of an experiment there will in general be a collection of such events that occur, one for each local region. One can associate a joint probability to this set of events, and, given enough runs of an experiment, one can empirically verify probability assignments for each possible permutation of events.

\begin{figure}%
\includegraphics[width=0.9\columnwidth]{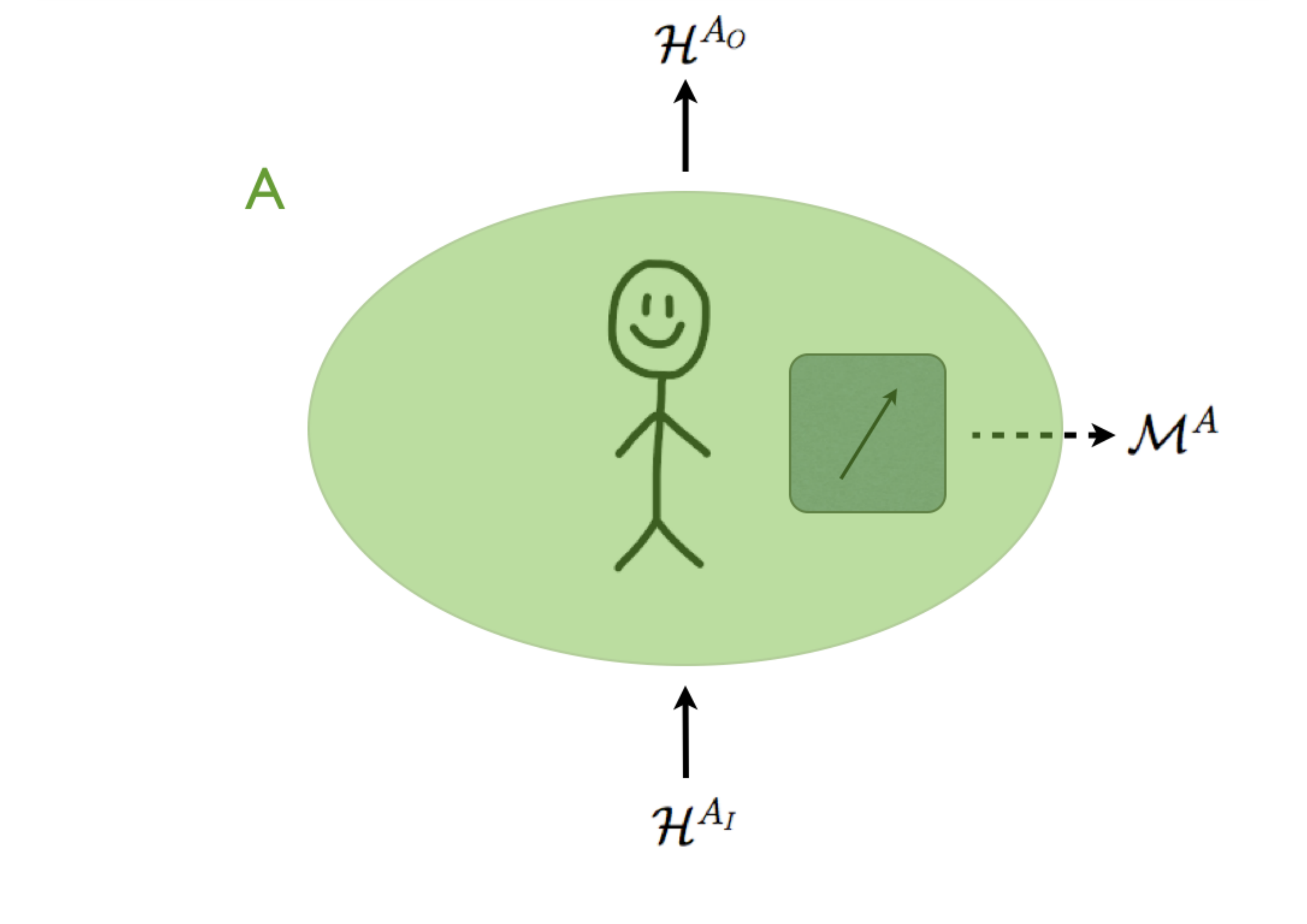}%
\caption{\textbf{Local region.} A local region $A$ is defined by an input ($\mathcal{H}^{A_I}$) and an output ($\mathcal{H}^{A_O}$) Hilbert space. An event is represented by a completely-positive map $\mathcal{M}^A$.}%
\label{region}%
\end{figure}

Formally, an event in region $A$ is represented by a \emph{completely positive trace-non-increasing} (CP) map $\mathcal{M}^A :A_I\rightarrow A_O$, where input and output spaces are the spaces of linear operators over input and output Hilbert spaces of the local region, $A_I\equiv\lin({\cal H}^{A_I})$, $A_O\equiv \lin({\cal H}^{A_O})$ respectively (here identified with the corresponding matrix spaces)~\cite{chuang00}, see Fig.~\ref{region}. We write $L^A \defeq \mathcal{L}(A_I, A_O)$ for the set of linear maps from $A_I$ to $A_O$. We denote the set of CP maps associated to each region, $CP^X \subset L^X$.

We demand \emph{complete} positivity because operationally it should be possible to perform arbitrary quantum operations in the local region. This includes performing operations on a subsystem that is part of a larger system. Complete positivity means that, for arbitrary dimensions of an ancillary system $A'$, the map ${\mathcal I}^{A'}\otimes {\mathcal{M}^A}$ transforms positive operators into positive operators, where ${\mathcal I}^{A'}$ is the identity map on ${A'}$. Trace non-increasing means that $\tr\mathcal{M}(\rho) \leq \tr\rho$ for all operators $\rho$. A CP map can be decomposed as $\mathcal{M}(\rho)=\sum_j K_j\rho K_j^{\dag}$, where the Kraus operators $K_j:{\cal H}^{A_I}\rightarrow {\cal H}^{A_O}$ satisfy $\sum_j K_j^{\dag} K_j \leq \id$ for a trace non-increasing map~\cite{Hellwig1969, Hellwig1970}

The context for each set of CP maps is now no longer a POVM but rather a \emph{quantum instrument}. An instrument thus represents the collection of all possible events that can be observed given a specific choice of local action\footnote{Note that the original definition of instrument was rather a generalisation of observable \cite{davies70}, while here we use the more recent definition as a generalisation of POVM.}. 
Given a local region $A$, an instrument is formally defined as a set ${\mathfrak I}^A$ of CP maps that sum up to a completely positive trace-preserving (CPTP) map:

\begin{equation}
\tr \sum_{\mathcal{M}^A\in {\mathfrak I}^A} \mathcal{M}^A(\rho) =  \tr(\rho).
\label{instrument}
\end{equation}

\begin{figure}[t]%
\includegraphics[width=1.1\columnwidth]{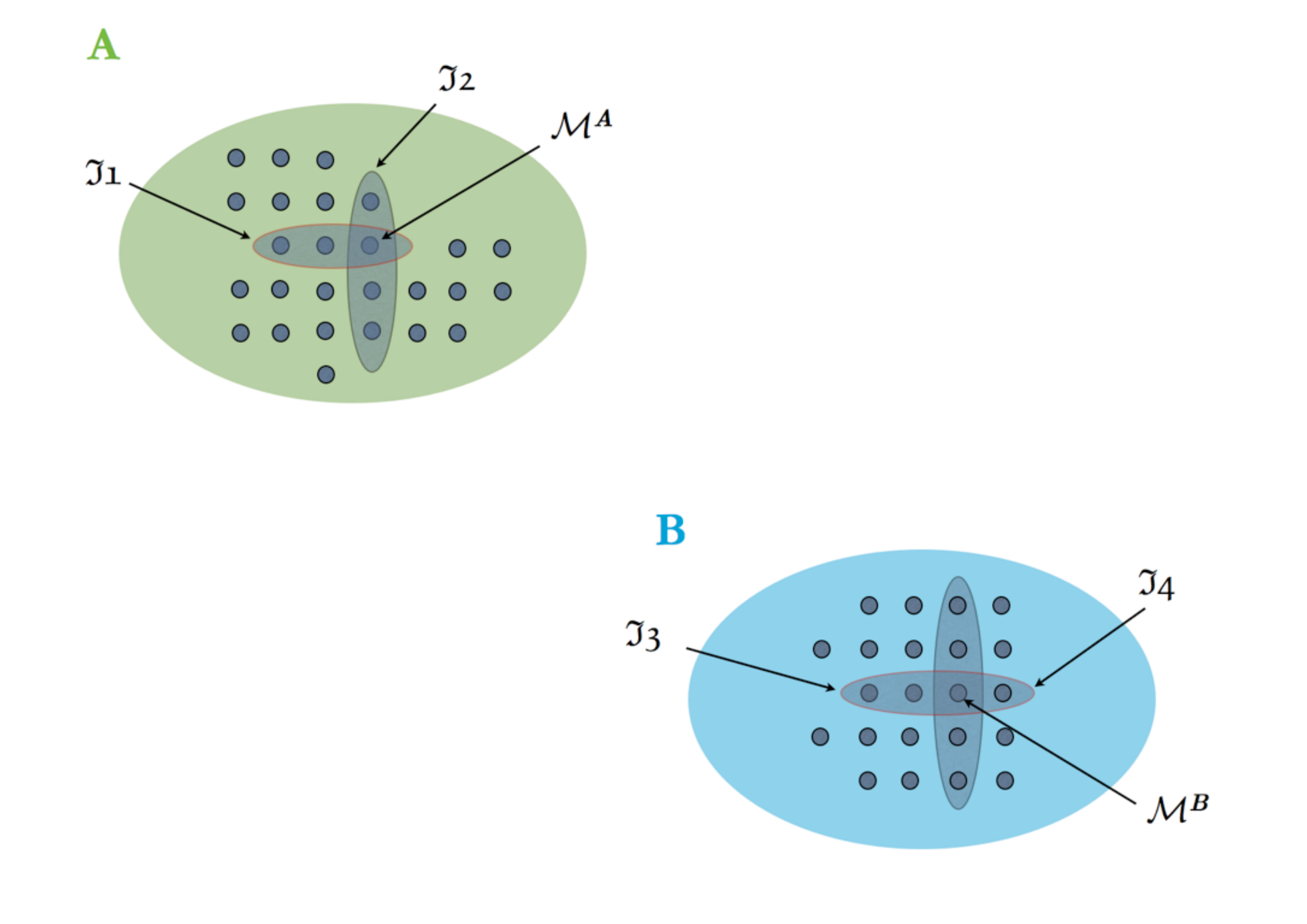}%
\caption{\textbf{Instrument non-contextuality}. Operations are performed in distinct local regions. Operation $\mathcal{M}^A$ in region $A$ corresponds to a shared outcome of two different instruments, $\mathfrak{I}_1$ and $\mathfrak{I}_2$; $\mathcal{M}^B$ in region $B$ to a shared outcome of instruments $\mathfrak{I}_3$ and $\mathfrak{I}_4$. Instrument non-contextuality implies the joint probability $P(\mathcal{M}^A, \mathcal{M}^B)$ for the two events is independent of whether instrument $\mathfrak{I}_1$ or $\mathfrak{I}_2$ was used in Region A, and whether instrument $\mathfrak{I}_3$ or $\mathfrak{I}_4$ was used in region B.}%
\label{noncontextuality}%
\end{figure}

We are now in a position to define the relevant frame-function and derive the appropriate probability rule for this scenario. Just as the Born rule tells us how to calculate the probability of a particular outcome given the relevant measurement operator, the Quantum Process Rule should tell us how to assign a joint probability to each possible collection of local events given the relevant instruments. We assume "instrument" non-contextuality, rather than "measurement" non-contextuality. That is, the joint probability for a set of events, one for each region, is independent of the particular context (set of instruments) to which they belong, see Fig.~\ref{noncontextuality}.
 
As for Ref.~\cite{Caves2004}, the non-contextuality assumption is formalised by requiring that probabilities are given by a frame-function. Each "frame" is now a collection of instruments, one per region, rather than a single POVM.

\begin{definition}\label{pframefn}
A frame-function, f, for a set of local regions {X= A, B, C....}, is defined by:

\begin{enumerate}
\item
f is a function from the cartesian product of the set of CP maps associated to each region, $CP^X \subset L^X$, to the unit interval:
\begin{equation}
f\colon CP^A \times CP^B \times CP^C ... \rightarrow [0,1]
\end{equation}
\item
f is normalised for all sets of CP maps, $\mathcal{M}^X$, that form instruments $\mathfrak{I}^X$,  

\begin {equation} 
{\sum_{\substack{\mathcal{M}^A \in \mathfrak{I}^A\\{\mathcal{M}^B \in \mathfrak{I}^B}\\{\mathcal{M}^C \in \mathfrak{I}^C}\\...}}}f(\mathcal{M}^A, \mathcal{M}^B, \mathcal{M}^C, ...) = 1
\end{equation}
\end{enumerate}

\end{definition}

We now show that this definition is sufficient to derive the new probability rule. As in Ref.~\cite{Caves2004} we first prove linearity of the frame-function. 

 \begin{theorem}\label{Wlinear}
The frame-function f is a convex-multilinear functional on $CP^A \times CP^B \times CP^C \times \dots$ 
\end{theorem}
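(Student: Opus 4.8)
The plan is to establish convex-multilinearity one argument at a time: with the maps in all other regions held fixed, I will show that $f$ is affine under convex combinations in the remaining slot, i.e. $f(\lambda\mathcal{N}_1^A+(1-\lambda)\mathcal{N}_2^A,\dots)=\lambda f(\mathcal{N}_1^A,\dots)+(1-\lambda)f(\mathcal{N}_2^A,\dots)$, and symmetrically for $B,C,\dots$. Mirroring the three-step structure of Ref.~\cite{Caves2004}, I will first extract additivity and positive-rational homogeneity in a single slot by purely algebraic manipulation of the normalisation condition, and then promote this to genuine affinity over the reals by a continuity argument that uses only the boundedness of $f$ in $[0,1]$.

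The engine for additivity is an instrument-regrouping trick. Given CP maps $\mathcal{N}_1^A,\mathcal{N}_2^A$ whose sum is still trace-non-increasing, I complete it to a CPTP map by a remainder $\mathcal{R}^A$ and compare the two instruments $\{\mathcal{N}_1^A,\mathcal{N}_2^A,\mathcal{R}^A\}$ and $\{\mathcal{N}_1^A+\mathcal{N}_2^A,\mathcal{R}^A\}$, which sum to the same CPTP map. Writing the normalisation condition for each and subtracting, the $\mathcal{R}^A$ contributions cancel and additivity emerges. Iterating additivity along $\mathcal{N}=\sum_{k=1}^{n}\tfrac1n\mathcal{N}$ (all partial sums remain valid CP maps) yields $f(\tfrac1n\mathcal{N},\dots)=\tfrac1n f(\mathcal{N},\dots)$, hence positive-rational homogeneity; together with additivity this gives the convex identity for rational weights. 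For a \emph{single} region this already closes the argument, since the normalisation sum then ranges over that region alone.

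The hard part is the cross-region localisation. For two or more regions the normalisation couples all slots, so subtracting the two regroupings does not isolate a fixed choice of maps in the other regions; it only shows that the additivity defect $\Delta(\mathcal{M}^B,\mathcal{M}^C,\dots)$ vanishes \emph{after} being summed over any product of instruments in the remaining regions. Concluding that $\Delta$ vanishes pointwise is the delicate step, and this is where instrument non-contextuality must do real work: I expect to use it to argue that the marginals of $f$, obtained by summing one slot over a complete instrument, are independent of which instrument is chosen. Once such context-independence of marginals is in hand, fixing $\mathcal{M}^A$ and renormalising turns $\mathcal{M}^B\mapsto f(\mathcal{M}^A,\mathcal{M}^B,\dots)$ into an ordinary single-region frame-function in slot $B$, to which the single-slot argument above applies verbatim; repeating over every slot delivers separate affinity in each argument, which is exactly convex-multilinearity.

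Finally, continuity upgrades the rational statements to real ones: since $f$ is valued in $[0,1]$, the additive, rational-homogeneous map in each slot is bounded on the cone of CP maps and therefore extends to a genuinely affine functional on convex combinations, completing the proof. I anticipate the localisation step (equivalently, the context-independence of the marginals) to be the crux, with the regrouping and continuity steps being routine adaptations of the effect-space argument of Ref.~\cite{Caves2004}.
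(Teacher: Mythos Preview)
Your plan diverges from the paper at precisely the step you flag as the crux, and the paper's resolution is much simpler than the route you sketch. The paper does not attempt to control an additivity defect summed over products of instruments in the remaining regions. Instead, it fixes the instruments in all regions other than $A$ to be \emph{singleton} instruments, each consisting of a single CPTP map $\overbar{\mathcal{M}}^B,\overbar{\mathcal{M}}^C,\dots$. With that choice the normalisation sum in those regions collapses to one term, so comparing the two regroupings $\{\mathcal{M}_1^A,\mathcal{M}_2^A,\mathcal{M}_3^A\}$ and $\{\mathcal{M}_1^A+\mathcal{M}_2^A,\mathcal{M}_3^A\}$ yields additivity in slot $A$ directly, with no cross-region coupling to undo. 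Rational homogeneity and the squeeze-theorem passage to the reals then proceed exactly as you outline.

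Your alternative route has a genuine gap: the ``context-independence of marginals'' you propose to invoke is \emph{false} in general. For a frame function arising from a process matrix that allows signalling from $B$ to $A$, the quantity $\sum_{\mathcal{M}^B\in\mathfrak{I}^B} f(\mathcal{M}^A,\mathcal{M}^B)$ depends on the CPTP sum $\sum_{\mathcal{M}^B\in\mathfrak{I}^B}\mathcal{M}^B$, and different instruments have different CPTP sums. Hence $\mathcal{M}^B\mapsto f(\mathcal{M}^A,\mathcal{M}^B)$ cannot in general be renormalised into a single-region frame function on $B$. The singleton-instrument trick is what replaces this step. (You might fairly observe that the paper's argument, as written, establishes linearity in each slot only with \emph{CPTP} maps in the others; promoting this to arbitrary CP maps in the other slots---as the theorem statement demands---is glossed over and would itself require a short additional argument.)
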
 
\noindent Where by convex-multilinear we mean:
\begin{align*}
&f\big[p \mathcal{M}_1^A + (1-p) \mathcal{M}_2^A, \mathcal{M}^B, \mathcal{M}^C, \dots \big]\\
&= pf\big[\mathcal{M}_1^A, \mathcal{M}^B, \mathcal{M}^C, \dots \big] + (1-p) f\big[\mathcal{M}_2^A, \mathcal{M}^B, \mathcal{M}^C, \dots \big]\\
&  (0\leq p \leq 1)
\end{align*}
and similarly for all other regions $B, C,\dots$

\begin{proof}
We fix instruments at all regions, except for region $A$, to be instruments with a single CPTP map each: 
$\overbar{\mathcal{M}}^B, ~\overbar {\mathcal{M}}^C,\dots$

Consider two instruments applied in region $A$:
\begin{align*}
&\mathfrak{I}^A_1 = \{\mathcal{M}^A_1, ~\mathcal{M}^A_2, ~\mathcal{M}^A_3\}\\
&\mathfrak{I}^A_2 = \{\mathcal{M}^A_1+\mathcal{M}^A_2, ~\mathcal{M}^A_3\} 
\end{align*}
The frame function constraints imply:
\begin{align*}
\!\!\!\! f\left[\mathcal{M}^A_1, ~\overbar {\mathcal{M}}^B, ...\right]  + f\left[\mathcal{M}^A_2, ~\overbar {\mathcal{M}}^B, ..\right]+f\left[\mathcal{M}^A_3, ~\overbar {\mathcal{M}}^B,..\right] &= 1\\
\!\!\!\! f\left[(\mathcal{M}^A_1+ \mathcal{M}^A_2), ~\overbar {\mathcal{M}}^B,...\right]+f\left[\mathcal{M}^A_3, ~\overbar {\mathcal{M}}^B, ...\right] &= 1
\end{align*}
Therefore 
\begin{align*}
f\left[\mathcal{M}^A_1, ~\overbar {\mathcal{M}}^B, ...\right]  + f\left[\mathcal{M}^A_2, ~\overbar {\mathcal{M}}^B, ..\right]\\
=f\left[(\mathcal{M}^A_1+ \mathcal{M}^A_2), ~\overbar {\mathcal{M}}^B,...\right]
\end{align*}
and thus we have additivity.

Separating a CP map $n\mathcal{M}^A$ into $m$ components, we can form a CP map $\frac{n}{m}\mathcal{M}^A$. Applying additivity twice

\begin{multline}
mf(\frac{n}{m} \mathcal{M}^A,\dots) = f(n\mathcal{M}^A,\dots) = nf(\mathcal{M}^A,\dots) \\
\rightarrow f(\frac{n}{m}\mathcal{M}^A,\dots)= \frac{n}{m}f(\mathcal{M}^A,\dots).
\end{multline}
Thus $f$ is linear in the nonnegative rationals.

Linearity of the frame function on the real numbers can be established using the `squeeze theorem' of elementary calculus~\cite{Smith1955}. Define two sequences of positive rationals, $\left\{a_n\right\}$ increasing and $\left\{b_n\right\}$ decreasing, that converge to the same real number $c$. Then, for any CP map $\mathcal{M}^A$, the map $\mathcal{N}_n^A:=(c-a_n)\mathcal{M}^A$ is also CP. Thus, fixing all maps in other regions to be CPTP, we have 
\begin{multline*}f\left(c \mathcal{M}^A,\dots\right) \\= f\left(a_n \mathcal{M}^A,\dots\right) + f\left(\mathcal{N}_n^A,\dots\right) \geq f\left(a_n\mathcal{M}^A,\dots\right).$$
\end{multline*}
Similarly, we have that $f\left(c \mathcal{M}^A,\dots\right) \leq f\left(b_n\mathcal{M}^A,\dots\right)$. This implies
\begin{align}
a_n f\left( \mathcal{M}^A,\dots\right) \leq f\left(c \mathcal{M}^A,\dots\right) \leq b_n f\left(\mathcal{M}^A,\dots\right).
\label{pinching}
\end{align}
Because $a_n f\left( \mathcal{M}^A,\dots\right)$ and $b_n f\left(\mathcal{M}^A,\dots\right)$ both converge to $c f\left(\mathcal{M}^A,\dots\right)$, Eq.~\eqref{pinching} implies 
\begin{equation}
f\left(c \mathcal{M}^A,\dots\right) = c f\left( \mathcal{M}^A,\dots\right)
\label{homogeneity}
\end{equation}
by the `squeeze theorem'.

We have thus proved that $f$ is linear on $CP^A$ and, with similar steps, linearity can be proven for $CP^B, CP^C, \dots$ which concludes the proof.
\end{proof}

Just as in ordinary quantum mechanics a \emph{state} is defined as a linear functional over effects (POVM elements), we can define a multilinear functional over sets of events (CP maps) as a \emph{process}, in accordance with the terminology of Refs.~\cite{oreshkov12, Brukner2014, araujo15, feixquantum2015, oreshkov15, Branciard2016, costa2016, Baumann2016, oreshkov2016, abbott2016}. 

We next use the fact that a linear functional can be expressed by means of an inner product. This enables us to derive a new probability rule using our frame function, and also gives the appropriate form for the matrix representation of a process.

First consider that because each $CP^X$ contains a basis of $L^X$, $X=A,B,\dots$, the frame function $f$ can be extended by linearity to the entire linear space $L^A \otimes L^B \otimes L^C\otimes\dots$ (as opposed to just the set of CP maps).
Next, it is easy to show that the natural inner product between any two linear maps $\mathcal{M}^A$, $\mathcal{N}^A$ $\in L^A$ is defined as follows (see Methods for details):
\begin{align}
\big(\mathcal{M}^A, \mathcal{N}^A \big) :=
 \sum_{\mu}& \tr\mathcal{M}^A(\tau_{\mu})^\dagger \mathcal{N}^A (\tau_{\mu}),
\label{CPIP} 
\end{align}
where $\left\{\tau_\mu\right\}_{\mu=0}^{d^2-1}$ is a Hilbert-Schmidt basis for the $d$-dimensional input space:
$\tau_{\mu}\in \mathcal{L}(\mathcal{H}^{A_I})$, 
$\tau_{\mu} = \tau_{\mu}^{\dagger}$, 
 $\tr \tau_{\mu}\tau_{\nu} = \delta_{\mu\nu}$.

One can also represent this inner product in a more convenient (and familiar) form by representing the CP maps associated to each region as Choi-Jamiolkowski (CJ) matrices~\cite{Choi1975, jamio72}. 
Recall, a CP map associated to a region $A$,  where input and output spaces are the spaces of linear operators over input and output Hilbert spaces, $A_I\equiv\lin({\cal H}^{A_I})$, $A_O\equiv \lin({\cal H}^{A_O})$, respectively, can be represented as a matrix\footnote{This definition aligns with the convention in Ref.~\cite{oreshkov12}. Other definitions, differing by a transpose or partial transpose, do not change the representation of the inner product}:
\begin{eqnarray}
\label{CJ}
M^{A} =& \sum_{j\,l}\ketbra{l}{j}^{A_I}\otimes \left[{\cal M}(\ketbra{j}{l})^{A_O}\right]^T ,  \label{inverseCJ}
\end{eqnarray}
where $\left\{\ket{j}\right\}_{j=1}^{d_{A_I}}$ is an orthonormal basis in ${\cal H}^{A_I}$ and $^T$ denotes transposition in that basis. We show in the Methods that the inner product \eqref{CPIP} can be expressed as
\begin{equation}
\big(\mathcal{M}^A , \mathcal {N}^A \big) = \tr M^{A\dagger}N^{A}
\end{equation}
and it is independent of the choice of Hilbert-Schmidt basis.

This inner product defines an isomorphism between elements of $L^A \otimes L^B \otimes L^C\otimes\dots$ and linear functionals on the same space. We can thus define a trace rule that allows one to determine the joint probability for a set of CP maps, one for each region:
\begin{equation} 
\begin{split} \label{Gborn}
f(\mathcal{M}^A, \mathcal{M}^B, &\dots)  \\
=&\big( \mathcal{W}_f , \mathcal{M}^A\otimes\mathcal{M}^B\otimes\dots \big) \\
=&\tr\left[\left(M^{A}\otimes M^{B} \otimes \dots \right)\cdot W_f^{A B\dots}\right],
\end{split}
\end{equation}
where $\mathcal{W}_f \in L^A \otimes L^B \otimes L^C\otimes\dots$ is the linear map that uniquely defines $f$ and $W_f^{A B\dots}$ is its CJ representation, called the  \emph{process matrix}. (In the following, we will drop the subscript $f$).

Similarly to a density matrix, the process matrix has to satisfy certain constraints so that expression \eqref{Gborn} yields a valid probability distribution for every collection of instruments. For a density matrix $\rho$, positivity of probabilities implies $\rho\geq 0$, while normalisation implies $\tr \rho =1$. For a process matrix, $W\geq 0$ is also required under the assumption that local operations can act on additional multipartite quantum states shared among the regions \cite{oreshkov12}. Normalisation imposes more complicated constraints than for density matrices; these can be expressed as linear constraints on $W$, see for example appendix B of Ref.~\cite{araujo15}.

\section*{Recovering the State update and Born rule}\label{update}
Let us recapitulate the rationale so far: it was shown in Ref~\cite{Caves2004} that if we accept the structure of quantum measurements, we can identify quantum probabilities as the most general non-contextual probability assignments. Whereas this approach only considers a single measurement/event---or at most measurements of separate quantum systems---in the quantum process approach outlined above we derive a general rule to assign joint probabilities to an arbitrary number of events. The ordinary Born rule is thus recovered from the general one in the case where a single region is considered---in which case instruments reduce to POVMs and process matrices reduce to density matrices~\cite{oreshkov12}.

We are in particular interested in the situation where two consecutive measurements are performed on a single quantum system. Ordinary Gleason-type derivations of quantum probabilities do not tell us how to assign joint probabilities to two such events: one must introduce an additional ingredient---the state update rule. If the statistics for the first measurement are described by a density matrix $\rho$, and the first measurement is described by a CP map $\mathcal{M}$, one calculates the probabilities for the second measurement, given the outcome of the first is known, by applying the Born rule to the updated state~\cite{Kraus1971}
\begin{equation}
\rho\rightsquigarrow \widetilde{\rho}_{\mathcal{M}}  = \frac{\mathcal{M}\left(\rho\right)}{\tr \mathcal{M}\left(\rho\right)} = \frac{\sum_{j}K_j\rho K_j^{\dag}}{\tr(\sum_j K_j^{\dag}K_j \rho)}.
\label{collapse}
\end{equation}
(Note that the update rule does not depend on the particular decomposition of $\mathcal{M}$ into Kraus operators $\left\{K_j\right\}_j$.)  In an operational perspective, rule \eqref{collapse} is seen as a quantum analogue of classical knowledge update. Within the quantum process framework, this is more than an analogy: the update rule is \emph{derived} from the joint probability assignment.

To make the argument rigorous, we should remark again that the quantum frame function \emph{is not} a normalised probability measure over the entire space of potential events. Formally, the frame function defines a conditional probability for observing a CP map $\mathcal{M}^A$ given an instrument $\mathfrak{I}^A$: 
\begin{equation}
\begin{matrix*}[l]
P(\mathcal{M}^A|\mathfrak{I}^A) 
&=f(\mathcal{M}^A) \quad &\textrm{if } \mathcal{M}^A\in\mathfrak{I}^A \\
&=0 \quad &\textrm{otherwise.}
\end{matrix*}
\label{frametoprob}
\end{equation}
(With a similar definition for multiple regions $A$, $B,\dots$) Even though the conditioning on the instruments is necessary to define \eqref{frametoprob} as a classical probability, we will omit it in the following out of notational convenience\footnote{That the classical probability \eqref{frametoprob} \emph{does} depend on the instrument is generally known as \emph{quantum contextuality}. This is the reason we need to introduce a frame function in the first place: it allows us to define a weaker form of noncontextuality in a theory that is, from the standpoint of classical probability theory, contextual. We remark that quantum contextuality is a general feature of quantum mechanics and not of our particular approach.}.

Expression \eqref{frametoprob} defines an ordinary, classical probability measure, which lets us use all the machinery of classical probability theory. In particular, the conditional probability to observe $\mathcal{M}^B$ in region $B$, given that $\mathcal{M}^A$ is observed in region $A$, can be calculated from the joint probability distribution:
\begin{align} \nonumber
P(\mathcal{M}^B|\mathcal{M}^A) =& \frac{P(\mathcal{M}^B,\,\mathcal{M}^A)}{P(\mathcal{M}^A)} \\ \nonumber
 =& \frac{\tr \left[\left(M^A\otimes M^B\right)\cdot W\right]}{\sum_{M^B\in \mathfrak{I}^B}\tr\left[ \left(M^A\otimes M^B\right)\cdot W\right]} \\ 
 =& \tr M^B \widetilde{W}_{M^A},
\label{bayes}
\end{align}
where we introduced the \emph{updated process matrix}
\begin{equation}
\widetilde{W}_{M^A}^{B_IB_O}:=\frac{\tr_{A_IA_O} \left[\left(M^A\otimes \id^B\right)\cdot W\right]}{\tr \left[\left(M^A\otimes \sum_{M^B\in \mathfrak{I}^B} M^B\right)\cdot W\right]}.
\label{updatew}
\end{equation}

Relevant to the ordinary state update rule is the case where $A$ precedes temporally $B$, and the evolution between the two events is trivial. This scenario is described by the process matrix (see, e.g., Ref.~\cite{costa2016})
\begin{align}
W=&\rho^{A_I}\otimes \Proj{\id}^{A_OB_I}\otimes \id^{B_O},\\
\Proj{\id}^{A_OB_I}:=& \sum_{jl}\ketbra{j}{l}^{A_O}\otimes \ketbra{j}{l}^{B_I},
\end{align}
where $\rho$ is the density matrix describing the input state of region $A$. A straightforward calculation shows that, in this case, the updated process matrix reduces to
\begin{equation}
\widetilde{W}_{M^A}^{B_IB_O} = \left[\frac{\mathcal{M}^A(\rho)}{\tr \mathcal{M}^A(\rho)}\right]^{B_I}\!\otimes \id^{B_O} \equiv \widetilde{\rho}_{\mathcal{M}^A}\otimes \id^{B_O},
\end{equation}
which is the process-matrix description of region $B$ receiving a state described by the density matrix $\widetilde{\rho}_{\mathcal{M}^A}$.

\section*{Discussion}

In this work we have shown that it is possible to use a Gleason-type approach to derive a quantum probability rule that subsumes both the Born rule and the state update rule. By using the structure of local quantum operations and a reasonable non-contextuality assumption we have derived both the new rule and the appropriate object to represent the arbitrary background structure, or process. 

Our demonstration that the state update, or "collapse" rule can be regarded as non-fundamental offers a new perspective on a variety of foundational questions. In particular, informational interpretations of wavefunction collapse can now be given a rigorous foundation: state-update can be viewed as a case of classical probabilistic conditioning.

A further advantage of the approach presented here is that it does not presuppose any a-priori distinction between space-like and time-like separated events. As such, it avoids conceptual difficulties associated with the non-covariant nature of the state update rule. It is thus a promising direction to develop a fully relativistic version of the formalism that encodes space-time symmetries.

\begin{acknowledgments}
We thank Josh Combes, Chris Timpson, and Howard Wiseman for helpful discussions. This work was supported by an Australian Research Council Centre of Excellence for Quantum Engineered Systems grant (CE 110001013), and by the Templeton World Charity Foundation (TWCF 0064/AB38). F.C.\ acknowledges support through an Australian Research Council Discovery Early Career Researcher Award (DE170100712). We acknowledge the traditional owners of the land on which the University of Queensland is situated, the Turrbal and Jagera people.
\end{acknowledgments}

\small


\section*{Methods}
\subsection*{Inner product for linear maps}
\label{innerproduct}

Here we construct the inner product on the space of linear maps $L^A=\{\mathcal{M}: \mathcal{L}(\mathcal{H}^{A_I}) \rightarrow \mathcal{L} (\mathcal{H}^{A_O})\}$ and derive its CJ representation.
Recall that, given an inner product $\left\langle \psi | \phi \right\rangle$ on a Hilbert space $\mathcal{H}$ and an arbitrary basis that is orthonormal with respect to this product, $\left\langle e_j | e_k \right\rangle=\delta_{jk}$, one defines the Hilbert-Schmidt scalar product for operators $\sigma, \rho \in \mathcal{L}(\mathcal{H})$ as
\begin{equation} \label{HS}
\big( \rho ,\sigma \big)_{\textrm{HS}}:= \sum_k  \left\langle \rho (e_k) | \sigma (e_k) \right\rangle = \tr \left(\rho^{\dag}\sigma\right),
\end{equation}
where we momentarily abandon the Dirac notation and represent explicitly the action of an operator on a vector as $v\in \mathcal{H}\rightsquigarrow \rho(v)\in \mathcal{H}$. (As it is well known, the definition of the Hilbert-Schmidt inner product does not depend on the choice of orthogonal basis.)

We  move a step further and, based on the Hilbert-Schmidt inner product, define an inner product for the space $L^A$ of linear maps. 
For this purpose, we select a basis of hermitian matrices for the input space that is orthonormal with respect to the Hilbert-Schmidt product (called Hilbert-Schmidt basis):
\begin{eqnarray*}
\tau_{\mu}\in \mathcal{L}(\mathcal{H}^{A_I}),\\
\tau_{\mu} = \tau_{\mu}^{\dagger}, \\
 \tr \tau_{\mu}\tau_{\nu} = \delta_{\mu\nu}.
\end{eqnarray*}
The inner product between any two linear maps $\mathcal{M}$, $\mathcal{N}$ is then defined in analogy to Eq.~\eqref{HS} and coincides with the inner product introduced in the main text:
\begin{align} \nonumber
\big(\mathcal{M}, \mathcal{N}\big)_{\textrm{S}} :=
\sum_{\mu}& \big(\mathcal{M}(\tau_{\mu}), \mathcal{N}(\tau_{\mu}) \big)_{\textrm{HS}} \\ 
 = \sum_{\mu}& \tr\mathcal{M}(\tau_{\mu})^\dagger \mathcal{N} (\tau_{\mu}),
\label{superinner} 
\end{align}
where the subscript S stands for ``superoperator''. Note that, just as for Eq.~\eqref{HS}, expression \eqref{superinner} formally corresponds to a trace over superoperators and is thus independent of the choice of basis.

Next, we want to relate the superoperator inner product to the CJ representation. Reintroducing the Dirac notation, the CJ inner product between operators is defined as
\begin{eqnarray} \label{CJA}
\mathcal{M} \rightarrow M^T := \sum_{jk} |j\rangle\langle k|^{A_I} \otimes\mathcal{M} ( |j\rangle\langle k|)^{A_O} , \\
\big( \mathcal{M} , \mathcal {N} \big)_{\textrm{CJ}} := \tr M^{\dagger}N.
\end{eqnarray}
Note that the inner product keeps the same form if definition \eqref{CJA} is replaced by its transpose. We can thus re-write it as
\begin{align} \nonumber
\big(\mathcal{M}&, \mathcal{N}\big)_{\textrm{CJ}}\\ \nonumber
=&\sum_{jkmn}\tr \big[|j\rangle\langle k|^{A_I} \otimes\mathcal{M} \left( |j\rangle\langle k|\right)^{A_O}\big]^\dagger |m\rangle \langle n|^{A_I} \otimes \mathcal{N}\left(|m\rangle \langle n|\right)^{A_O}\\ \nonumber
= &\sum_{jkmn}\langle j|m\rangle \langle n|k\rangle \tr \mathcal{M}(|j\rangle \langle k|)^\dagger \mathcal{N} (|m\rangle \langle n|) \\
=&\sum_{mk} \tr \mathcal{M} (|m\rangle \langle k|)^\dagger \mathcal{N}(|m\rangle \langle k|).
\label{CJdecomp}
\end{align}
To see how this relates to the superoperator inner product, we need to recall two useful facts.
\begin{lemma}
Given a Hilbert space $\mathcal{H}$, the \emph{swap operator} $S:\mathcal{H}\otimes\mathcal{H}\rightarrow \mathcal{H}\otimes\mathcal{H}$, defined by its action $S\ket{\psi}\ket{\phi}=\ket{\phi}\ket{\psi}$, can be written as 
\begin{equation}
S = \sum_{\mu} \tau_{\mu} \otimes \tau_{\mu}
\label{swapHS}
\end{equation}
 for an arbitrary Hilbert-Schmidt basis $\left\{\tau_{\mu}\right\}\subset \mathcal{L}\left(\mathcal{H}\right)$.
\end{lemma}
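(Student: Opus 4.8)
The plan is to exploit the fact that the family $\{\tau_\mu\otimes\tau_\nu\}$ is itself an orthonormal basis of $\mathcal{L}(\mathcal{H}\otimes\mathcal{H})$ with respect to the Hilbert--Schmidt inner product, inherited from the single-system orthonormality $\tr\tau_\mu\tau_\nu=\delta_{\mu\nu}$ via $\tr[(\tau_\mu\otimes\tau_\nu)(\tau_\alpha\otimes\tau_\beta)]=\tr(\tau_\mu\tau_\alpha)\tr(\tau_\nu\tau_\beta)=\delta_{\mu\alpha}\delta_{\nu\beta}$, the count $d^4=(d^2)^2$ guaranteeing that it spans. Since $S$ is a linear operator on $\mathcal{H}\otimes\mathcal{H}$, I would first expand it in this basis as $S=\sum_{\mu\nu}c_{\mu\nu}\,\tau_\mu\otimes\tau_\nu$, with coefficients fixed by the inner product, $c_{\mu\nu}=\tr[(\tau_\mu\otimes\tau_\nu)^\dagger S]=\tr[(\tau_\mu\otimes\tau_\nu)S]$, where the Hermiticity $\tau_\mu=\tau_\mu^{\dagger}$ lets me drop the daggers. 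The whole statement then reduces to showing that these coefficients are exactly $\delta_{\mu\nu}$.

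The key step, and the only one requiring any real computation, is the operator identity $\tr[(A\otimes B)S]=\tr(AB)$ valid for arbitrary $A,B\in\mathcal{L}(\mathcal{H})$. I would establish it by writing $S$ explicitly in an orthonormal vector basis $\{\ket{i}\}$ of $\mathcal{H}$ as $S=\sum_{ij}\ketbra{i}{j}\otimes\ketbra{j}{i}$, which one checks directly against the defining action $S\ket{\psi}\ket{\phi}=\ket{\phi}\ket{\psi}$, and then evaluating $\tr[(A\otimes B)S]=\sum_{ij}\bra{j}A\ket{i}\bra{i}B\ket{j}=\sum_j\bra{j}AB\ket{j}=\tr(AB)$. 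This is the heart of the argument; everything else is bookkeeping.

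Applying this identity with $A=\tau_\mu$ and $B=\tau_\nu$ gives $c_{\mu\nu}=\tr(\tau_\mu\tau_\nu)=\delta_{\mu\nu}$ by the defining orthonormality of the Hilbert--Schmidt basis, so that $S=\sum_{\mu\nu}\delta_{\mu\nu}\,\tau_\mu\otimes\tau_\mu=\sum_\mu\tau_\mu\otimes\tau_\mu$, which is the claim. Since the coefficients came out independent of which Hilbert--Schmidt basis was chosen, the representation holds for an \emph{arbitrary} such basis, as asserted. I do not expect any genuine obstacle here: the only real risk is notational, namely keeping the index placement in the explicit form of $S$ consistent so that the trace identity comes out as $\tr(AB)$ rather than $\tr(BA)$ --- and since the two coincide under the trace, even that ambiguity is harmless.
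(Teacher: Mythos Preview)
Your proof is correct and follows essentially the same route as the paper: expand $S$ in the tensor Hilbert--Schmidt basis, compute the coefficients $\tr[(\tau_\mu\otimes\tau_\nu)S]$ via the explicit form $S=\sum_{ij}\ketbra{i}{j}\otimes\ketbra{j}{i}$, and find they equal $\tr(\tau_\mu\tau_\nu)=\delta_{\mu\nu}$. The only cosmetic difference is that you isolate the identity $\tr[(A\otimes B)S]=\tr(AB)$ as a separate step before specialising to $A=\tau_\mu$, $B=\tau_\nu$, whereas the paper performs the same calculation directly on the basis elements.
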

\begin{proof}
Viewed as an operator, $S$ can be decomposed with respect to a basis $\left\{\ket{j}\right\}$ of the Hilbert space $\mathcal{H}$ as
$S=\sum_{km}\ketbra{k}{m}\otimes\ketbra{m}{k}$. On the other hand, viewed as a vector on the linear space of operators $\mathcal{L}\left(\mathcal{H}\otimes \mathcal{H}\right)$, $S$ can be decomposed with respect to the Hilbert-Schmidt basis as 
\begin{equation}\label{HSdecomp}
S = \sum_{\mu\nu}\tau_{\mu}\otimes\tau_{\nu} \tr\left[ \left(\tau_{\mu}\otimes\tau_{\nu} \right)\cdot S\right].
\end{equation}
The components in the above representation are given by
\begin{eqnarray*}
\tr\left[ \left(\tau_{\mu}\otimes\tau_{\nu} \right)\cdot S\right] = \sum_{km} \tr\left[ \big(\tau_{\mu} \otimes \tau_{\nu}\big)\cdot \big(|k\rangle \langle m| \otimes |m \rangle \langle k|\big)\right]\\
= \sum_{km} \langle m| \tau_{\mu} |k \rangle \langle k |\tau_{\nu}|m\rangle\\= \tr \tau_{\mu} \tau_{\nu}
=\delta_{\mu\nu}.
\end{eqnarray*}
Plugging this into the decomposition \eqref{HSdecomp}, we obtain Eq.~\eqref{swapHS}.
\end{proof}

This lemma can be used to prove the completeness relation
\begin{equation}
\sum_
{\mu} \langle m|\tau_{\mu}|k\rangle^*\langle n| \tau_{\mu}|r\rangle
= \delta_{mn}\delta_{kr}.
\label{complete}
\end{equation}
Indeed, using $\tau_{\mu}=\tau_{\mu}^{\dag}$, we have
\begin{align*}
\sum_{\mu} &\langle m|\tau_{\mu}|k\rangle^*\langle n| \tau_{\mu}|r\rangle 
\\
= \sum_{\mu} &\langle k|\langle n| \tau_{\mu} \otimes \tau_{\mu} |m \rangle |r \rangle
\\ 
= \langle k|& \langle n| S |m\rangle |r\rangle 
= \delta_{mn}\delta_{kr}.
\end{align*}

We can now re-write the superoperator inner product: 
\begin{align*}
\big( \mathcal{M} , \mathcal {N} \big)_{\textrm{S}} = \sum_{\mu} \tr \mathcal{M} (\tau_{\mu})^\dagger \mathcal{N}(\tau_{\mu})\\
= \sum_{mknr} \sum_{\mu}\langle m|\tau_{\mu}|k\rangle^* \langle n |\tau_{\mu} |r\rangle \tr \mathcal{M}(|m\rangle \langle k|)^\dagger \mathcal{N} (|n\rangle \langle r|)\\
=\sum_{km} \tr \mathcal{M} \big(|m\rangle \langle k|\big)^\dagger \mathcal{N} \big(|m\rangle \langle k|\big).
\end{align*}
Comparing this with Eq.~\eqref{CJdecomp}, we conclude that
$\big(\mathcal{M}, \mathcal{N}\big)_{\textrm{CJ}}=\big(\mathcal{M}, \mathcal{N}\big)_{\textrm{S}}$.

\end{document}